%
%
%
%
%
%
%
\documentclass[reprint,aps,pra,]{revtex4-2}

\usepackage{graphicx}
\graphicspath{ {./img/} }
\usepackage{dcolumn}
\usepackage{tikz}
\usepackage{taisanul}
\usepackage{braket}
\usepackage{bm}


\begin{document}

\preprint{APS/123-QED}

\title{Does Entanglement Correlation in Ground State Guarantee Quantum Energy Teleportation?}

\author{Taisanul Haque}
    \email{taisanul.haque@stud.uni-goettingen.de}

\affiliation{%
 Institute of Theoretical Physics, University of Goettingen
}%

\date{\today}

\begin{abstract}
Although extraction of energy from the ground state is forbidden, one can utilize Quantum Energy Teleportation (QET) protocol for energy extraction --- a two-step protocol involving quantum measurements followed by LOCC.  This is an unique method to ``extract energy from ground states'' of quantum systems. QET requires some correlation in the ground state, and entanglement correlation plays a crucial roles as a resource. The general belief is that if the ground state is quantum-correlated via entanglement for two different sites in a quantum system, and if we perform measurements on one of the sites, we can find an LOCC for the other site to successfully accomplish QET. In this paper, we show that this belief may not be true in the case of the Toric Code. We demonstrate this by performing a PVM measurements on spins in the Toric Code. Based on the measurement outcomes, we found that there is no LOCC for successful QET.
\end{abstract}

\maketitle


\section{\label{sec:level1}Introduction}
Quantum Energy Teleportation (QET) is a novel protocol enabling the ``extraction of energy from the ground state of quantum systems''\cite{Hotta_2008,Hotta_2009,Hotta_2010,hotta2011quantumenergyteleportationintroductory}, a process that is classically inconceivable due to the prohibition of energy extraction from systems in their lowest-energy state. The QET protocol, a two-step mechanism involving local quantum measurements and subsequent local operations and classical communication (LOCC), exploits quantum correlations inherent in the ground state to redistribute and effectively extract energy without violating fundamental conservation laws. Unlike conventional energy transfer mechanisms reliant on physical carriers or direct interactions, QET uses the nonlocality of quantum correlations, making it a promising candidate for energy transfer in regimes where classical approaches falter.

\vspace{0.5cm}

QET has been extensively studied in various quantum systems, extending beyond its initial theoretical formulation. Specifically, QET protocols have been analyzed in diverse physical models, demonstrating their applicability across multiple quantum platforms. For instance, QET has been investigated in the harmonic chain model \cite{PhysRevA.82.042329}. Similarly, QET has been explored in spin chain models \cite{Hotta_2009}, providing insights into energy extraction mechanisms in discrete many-body quantum systems governed by spin interactions.
\vspace{0.5cm}

Moreover, the feasibility of QET has been examined in trapped ion systems \cite{PhysRevA.80.042323}, where the controlled interactions between ions allow for precise manipulation of quantum correlations necessary for energy teleportation. In the realm of condensed matter physics, QET has been theoretically proposed in quantum Hall systems \cite{PhysRevA.84.032336}. Remarkably, extensions of QET to black hole physics have also been studied \cite{PhysRevD.81.044025}, offering intriguing implications for quantum field theory in curved spacetime and quantum gravity.

Beyond theoretical advancements, recent experimental progress has provided empirical validation of QET. Notably, QET was experimentally realized in a controlled quantum system \cite{Rodriguez-Briones:2022jla}, marking a significant milestone in the practical implementation of quantum energy transfer protocols. Furthermore, QET protocols have been demonstrated using quantum hardware \cite{PhysRevApplied.20.024051}, using quantum computing platforms to implement and test energy teleportation mechanisms in engineered quantum circuits. These developments underscore the growing experimental feasibility of QET and its potential applications in future quantum technologies.

A crucial requirement for QET is the presence of quantum correlations, such as entanglement, in the ground state. These correlations enable the protocol to redistribute energy within the system upon measurement, transferring it from one region to another. However, the precise role of entanglement and the conditions under which QET is feasible remain active areas of research, particularly in systems with complex topological structures or unconventional correlation patterns. Moeover, it has been shown\cite{Frey_2013,haque2024aspectsquantumenergyteleportation} that entanglement is not a necessary resource for QET.

In this paper, we address the validity of QET in the context of the Toric Code, a model in topological quantum computing characterized by long range entanglement and robust ground-state degeneracy\cite{Kitaev_2003,Kitaev_2006,KitaevTEE_2006}. By analyzing the outcomes of specific projective measurements and subsequent LOCC, we claimed that QET may not be feasible in spite of having entanglement in ground states. Our findings reveal fundamental constraints on the protocol’s applicability, offering new insights into the interplay between quantum correlations and energy transport in quantum systems.

\section{Quantum Energy Teleportation Protocol}

Consider a quantum system described by the Hamiltonian
$$
{H} = \sum_{i} {h}_i,
$$
where ${h}_i$ represents nearest neighbor interacting local Hamiltonian at site $i$ , and without loss of generality, the ground state $\ket{\psi_0}$ satisfies ${H} \ket{\psi_0} = 0 \ket{\psi_0}$.

\subsection{Local Measurement}
In the QET protocol, a measurements are performed on a subsystem $A$ of the full system. The need for measurement is to breaking down the passivity of ground state, thus after local measurements in A, the whole system would be in an excited state soley due to energy deposition in A. This measurement should not increase any energy density where LOCC is to be applied i.e., in B (Bob's sites). Generally, one would find some LOCC for Bob, which enables to locally extracts some energy from his system.

The measurement is described by $\{M_A(k)\}_k$, satisfying the completeness relation:
$$
\sum_k {M}_A(k)^{\dagger} {M}_A(k) = \mathbb{I}_A.
$$
When the outcome $k$ is obtained, the post-measurement state of the system becomes
$$
\ket{\psi_k} = \frac{{M}_A(k)\ket{\psi_0}}{\sqrt{p_k}},
$$
where the probability of the outcome $k$ is
$$
p_k = \bra{\psi_0} {M}_A(k)^{\dagger} {M}_A(k) \ket{\psi_0}.
$$

On an average, the final state after the measurements becomes $\rho_A=\sum_k \ket{\psi_k}\bra{\psi_k}$ and energy injected to the system is
\begin{align}
    E_A=\sum_k \braket{\psi_k|H|\psi_k}\label{eq:II.1}
\end{align}
\subsection{LOCC and Energy Extraction}
The local measurement on subsystem $A$ breakdowns the quantum correlation between A and B, thus their local energy density is no longer stored in a correlated configuration. Hence, Bob can apply some local unitary based on the measurement outcomes to extract energy.

After applying LOCC $\{ U_B(k) \}_k$ on Bob's site, on an average, the final state becomes
$\rho_B=\sum_k \ket{\psi'_k}\bra{\psi'_k}$, where $\ket{\psi'_k}=U_B(k)\ket{\psi_k}$

The energy in the whole system due to Bob's action becomes
\begin{align}
    E_B=\sum_k\braket{\psi'_k|H|\psi'_k}\label{eq:II.2}
\end{align}

From equation \eqref{eq:II.1} and \eqref{eq:II.2}, Bob can extracts some energy whenever $E_B-E_A<0$, which means Bob's system is loosing some energy and Bob can store this energy in an appropriate device. One important point in QET is that time scale for LOCC must much smaller than the time scale for energy diffusion after the measurement.

\section{The Toric Code Model}
The Toric code model was originally introduced by Alexei Kitaev\cite{Kitaev_2003}. It is a foundational model in the field of topological quantum computation. It provides an example of a quantum system that exhibits topological order \cite{Wen:1989iv,PhysRevLett.96.110405,Kitaev_2006}, making it robust against local perturbations and a promising candidate for fault-tolerant quantum computation\cite{RevModPhys.80.1083,Dennis_2002,Terhal_2015}. The model is based on a spin lattice on a torus, with degrees of freedom represented by qubits located on the edges of the lattice.

\subsection{The toric code Hamiltonian}

The Toric code model is defined on a two-dimensional square lattice of size $L\times L$ with qubits placed on the edges. The Hamiltonian of the Toric code consists of two types of operators: the \textbf{star operator/vertex} $A_v$ and the \textbf{plaquette operator} $B_p$. There are total of $2 L^2$ number of spins operators in the lattice i.e., $L^2$ numbers of $\sigma^z_i$ operators and $L^2$ numbers of $\sigma^x_i$ operators.
\begin{figure}[!ht]
    \centering
    \includegraphics[width=0.5\linewidth]{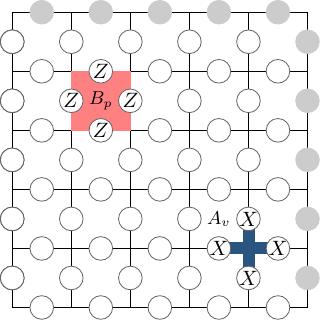}
    \caption{The $L\times L$ spin lattice shown with periodic boundary condition (PBC) in which each white circle represents spins and gray circle depicts the PBC.}
    \label{fig:1}
\end{figure}
 Plaquette operator in figure \ref{fig:1} is shown by a red block and vertex operator is shown by blue-cross. Here $X,Z$ are visually representing the Pauli's spin operators.

\subsection{Star (Vertex) Operator}
The star operator acts on the four edges surrounding a vertex (or \textit{star}) $v$. It is defined as:
$$
A_v = \prod_{i \in \text{star}(v)} \sigma_i^x
$$
where $\sigma_i^x$ is the Pauli-X matrix acting on the $i$-th qubit. The star operator enforces a constraint that the qubits around a vertex must be in an eigenstate of the product of their Pauli-X operators.

\subsection{Plaquette Operator}
The plaquette operator acts on the four edges surrounding a plaquette (or \textit{face}) $p$. It is given by:
$$
B_p = \prod_{i \in \partial p} \sigma_i^z
$$
where $\sigma_i^z$ is the Pauli-Z matrix acting on the qubit $i$, and $\partial p$ represents the boundary of the plaquette $p$. The plaquette operator enforces a similar constraint as the star operator, but using the Pauli-Z matrix.

\subsection{Full Hamiltonian}
The full Hamiltonian of the Toric code is constructed by summing over of all star and plaquette operators in the lattice:
\begin{equation}
    H = - \sum_v A_v - \sum_p B_p\label{eq:8.1}
\end{equation}
One of the important property of this Hamiltonian is that all the terms commutes i.e., all the vertex and plaquette operators commutes with each other. This is easy to see that trivially all the vertex (plaquette) operators commutes with itself, the only nontrivial part is all the vertex operators commutes with all other plaquette operators. Since, vertex and plaquette operators shares either two spins or no spins, thus if it does not shares any spins then it commutes and if it does shares two spins then two spins compensate the $-1$ and commutes.

Notice that, all the plaquette and vertex operators satisfy $B^2_p=1$ and $A^2_v=1$ due to the properties of Pauli matrices. Thus the eigenvalues of these operators are $\pm 1$. Ground state for \eqref{eq:8.1} is determined by minimising the Hamiltonian. Since, all the plaquette and vertex operators commutes, ground state is given by simultaneous eigenvectors of all the plaquette and vertex operators with eigenvalue $+1$. Let us denote ground state by $\ket{\xi}$. Now system \eqref{eq:8.1} in the ground state satisfies:
\begin{equation}
    A_v\ket{\xi}=\ket{\xi} \text{ and } B_p\ket{\xi}=\ket{\xi} \text{ for all }v, p\label{eq:8.2}
\end{equation}
In the context of quantum error correction, $A_v$ and $B_p$ for all $v,p$ are called \textit{Stabilizer operators} because it stabilizes the state $\ket{\xi}$. This state represents a long ranged entangled quantum state\cite{KitaevTEE_2006}. Ground state is degenerate and degeneracy depends on the topology of the surface on which the lattice is defined\cite{Kitaev_2003}. For a system on a torus, the degeneracy is four. This degeneracy is a global property of the system and cannot be detected by any local measurements, making the ground states robust against local perturbations.

The four degenerate ground states can be labeled by the eigenvalues of two pairs of non-local operators called the \textit{Wilson loops} $W_x$ and $W_y$, which act along the non-contractible loops around the torus in the $x$- and $y$-directions.
\begin{figure}[!ht]
    \centering
    \includegraphics[width=0.5\linewidth]{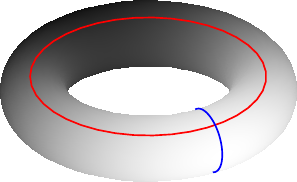}
    \caption{Two different kind of Wilson loops are illustrated. The red loop corresponds to the $W_x$ operator and blue one correspond to $W_y$.}
    \label{fig:2}
\end{figure}

These operators are defined as:
$$
W_x = \prod_{i \in \gamma_x} \sigma_i^z, \quad W_y = \prod_{i \in \gamma_y} \sigma_i^x
$$
where $\gamma_x$ and $\gamma_y$ are the loops around the torus in the $x$- and $y$-directions, respectively. The four ground states correspond to the different combinations of eigenvalues $\pm 1$ for $W_x$ and $W_y$.

\section{Excitations and Anyons}

The excitations in the Toric code model are created by violating the eigenvalue conditions of the star and plaquette operators. These excitations correspond to \textit{anyons}, which are quasiparticles that exhibit non-trivial braiding statistics\cite{Kitaev_2006}.

\subsection{Electric and Magnetic Excitations}

There are two types of anyonic excitations in the Toric code:

\begin{itemize}
    \item \textbf{Electric charges (e)}: These are created by flipping the eigenvalue of a star operator $A_v$ from $+1$ to $-1$. The electric charge corresponds to a violation of the star constraint and can be viewed as an excitation living on the vertices of the lattice.

    \item \textbf{Magnetic fluxes (m)}: These are created by flipping the eigenvalue of a plaquette operator $B_p$ from $+1$ to $-1$. The magnetic flux corresponds to a violation of the plaquette constraint and can be viewed as an excitation living on the faces of the lattice.
\end{itemize}

The electric charges and magnetic fluxes are mutual anyons, meaning that when one type of particle is moved around another, the quantum state acquires a non-trivial phase factor i.e., Berry phase\cite{PhysRevLett.48.1144,Kitaev_2003}. This behavior is solely due the topological nature of the quantum system.

\section{Quantum energy teleportation?}
Ground state of the toric code Hamiltonian \eqref{eq:8.1} has topological order parameters and has long range entanglement. In the paper\cite{PhysRevA.71.022315}, these authors have demonstrated different possible way divide the toric code system into bipartite system to quantify the entanglement by entanglement entropy. For example; A single spin is maximally entangled with rest of the system whereas two spins located at distinct lattice sites are not entangled at all. Also, a spin chain system is entangled to it's complement system.

Here, we perform the QET protocol for a bipartite system in which system A in the figure \ref{fig:3} contains $2L^2-1$ spins and B contains $1$ spin.

\begin{figure}[!ht]
    \centering
    \includegraphics[width=0.5\linewidth]{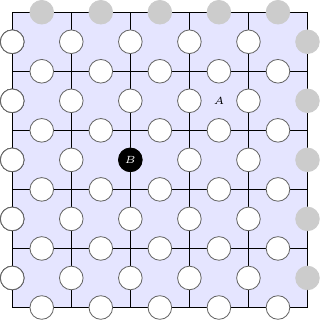}
    \caption{A bipartite system with $2L^2-1$ spins in A i.e., spins in the blue colored regions and a single spin in B with black dot.}
    \label{fig:3}
\end{figure}

In the toric lattice, we notice that each spins are shared by two plaquette operators and two vertex operators. Thus, any measurements operators commutes with vertex and plaquette operators if it does not measures spins on these vertex and plaquette operators because they are spatially located at different sites.

\subsection{Measurement}
Alice performs a projective measurements on $n$ number of spins for all $1\le n\le 2L^2-1$ in A. Kraus operator for measurements are, $M_A(k)=\frac{1}{2}(\mathbf{I}+k\sigma^x_{\pi(2)}\sigma^x_{\pi(3)}\cdots \sigma^x_{\pi(n)})$ where  $\pi:\{2,3,\cdots, 2L^2\}\ra \{2,3,\cdots, 2L^2\}$ is a bijective map i.e., $\pi$ is permutation map. Index $1$ is reserved for system B. Notice that, $M_A(k)$ satisfy all the criteria for measurement operator for $k=\pm 1$. Moreover, $M^2_A(k)=M_A(k)M^\dagger_A(k)=M_A(k)$.

Now in the ground state, $\ket{\xi}$, each Pauli-X operators creates two excitation on every plaquette operators and vertex operators are unaffected. We can also measures the same spins with respect to vertex operators by defining the measurements operators with Pauli-Z matrices. In either cases, final results will be similar due to similar properties of these operator. One can also choose Pauli-Y measurements, the final results does not change. The energy injected into the system due to the measurement is given by
\begin{equation}
    E_A=\sum_k\braket{\xi|M_A(k)HM_A(k)|\xi}\label{eq:8.3}
\end{equation}
We observe that, measurements on spins of open strings within A always creates two excitation i.e., two magnetic vortices. In our measurement scheme, in general we perform maximum of $n$ joint spins measurements in A. If we choose a permutation $\pi$ such that there are $m<n$ numbers of identity maps in $M_A(k)$ then physically we measure $n-m$ spins if $m$ is even otherwise $n-m+1$ spins. For example; if $\pi: (2)(3)(4)(567\cdots, n)$ in the cycle notation, then there are product of three Pauli-X matrices in $M_A(k)$ which is equivalent to single Pauli-X matrix. Thus, in this case we measure $n-2$ physical qubit. In our case, we measure all the spins in A to make sure it breaks all the correlation between A and B. Thus we see that, only two vortices are created on the plaquette operators in which spin B is located.

\subsection{LOCC}
Based on the measurement outcomes, Bob performs a general local unitary, $U_B=\cos(\theta)+ik\sin(\theta)\hat{n}\cdot\vec\sigma_B$, to his system where $\hat{n}=(n_x,n_y,n_z)$ is an unit vector. This means, once Bob knows the measurement outcome from Alice, he instantly perform local rotation of his spins by angle $\theta$ around the axis $\hat n$. 

Now energy after LOCC becomes:

\begin{equation}
    E_B=\sum_k\braket{\xi|M_A(k)U^\dagger_B(k)HU_B(k)M_A(k)|\xi}\label{eq:8.4}
\end{equation}

To evaluate the above expression, we calculate $[H,U_B]=ik\sin(\theta)[H,\hat{n}\cdot\vec\sigma_B]$, which implies $U^\dagger_BHU_B=H + ik\sin(\theta)U^\dagger_B[H,\hat{n}\cdot\vec\sigma_B]$, therefore
\begin{widetext}
    \begin{equation}
    M_A(k)U^\dagger_B(k)HU_B(k)M_A(k)= M_A(k)HM_A(k)+ik\sin(\theta)M_A(k)U^\dagger_B(k)[H,\hat{n}\cdot\vec\sigma_B]M_A(k)\label{eq:8.5}
\end{equation}
\end{widetext}

Hence, 
\begin{equation}
    E_B=E_A + \sum_kik\sin(\theta)\braket{\xi|M_A(k)U^\dagger_B(k)[H,\hat{n}\cdot\vec\sigma_B]M_A(k)|\xi}\label{eq:8.6}
\end{equation}

We show that the second term in equation \eqref{eq:8.6} is non -negative. Thus no energy teleportation!
\begin{figure}[!ht]
    \centering
    \includegraphics[width=0.5\linewidth]{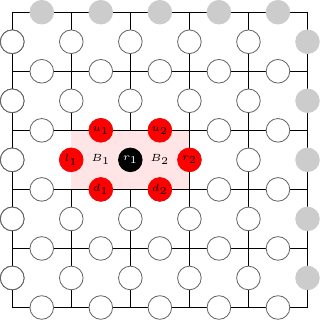}
    \caption{The red dots are the potential spins in our measurement scheme in which the measurement operator does not commutes with local unitary operators.}
    \label{fig:4}
\end{figure}

To calculate $[H,\hat{n}\cdot\vec\sigma_B]$ from the equation \eqref{eq:8.6}, let us calculate $[H,\sigma^i_B]=-\sum_v[A_v,\sigma^i_B]-\sum_p[B_p,\sigma^i_B]$ for all $i=x,y,z$. We see that $H$ commutes with all the operators except plaquette and vertex operators which contain the spin at $r_1$ in figure \ref{fig:4}. Let us denote $B=B_1+B_2$ and $A=A_1+A_2$ for the sake of calculation. Moreover, we redefined the index of $\sigma^y$ at system B as $r_1(\text{ or }l_2)$. Using, $[B,\sigma^i_{r_1}]=2B\sigma^i_{r_1}$ for $i=x,y$ and $[A,\sigma^i_{r_1}]=2A\sigma^i_{r_1}$ for $i=y,z$. Thus, the final expression for $[H,\hat{n}\cdot\vec\sigma_B]$ becomes;
\begin{equation}
    [H,\hat{n}\cdot\vec\sigma_B]=-2n_xB\sigma^x_{r_1}-2n_y(A+B)\sigma^y_{r_1}-2n_zA\sigma^z_{r_1}\label{eq:8.7}
\end{equation}

Now, we state and prove the following two lemmas, which are useful for simplification of equation\eqref{eq:8.6}:
\begin{lem}\label{lem:8.1}
    If $B$ is any plaquette operator then $M_A(k)BM_A(k)$ is either zero or $M_A(k)B$.
\end{lem}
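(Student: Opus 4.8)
The plan is to reduce the statement to the algebra of the single operator string $S := \sigma^x_{\pi(2)}\sigma^x_{\pi(3)}\cdots\sigma^x_{\pi(n)}$ that appears inside $M_A(k)=\tfrac{1}{2}(\mathbf{I}+kS)$. First I would record two elementary facts about $S$: since it is a product of distinct, mutually commuting Pauli-X operators, it is Hermitian, $S=S^\dagger$, and involutive, $S^2=\mathbf{I}$. Consequently $M_A(k)$ is an orthogonal projector, so that (as already noted in the paper) $M_A(k)^2=M_A(k)$, and moreover the complementary projectors are orthogonal: $M_A(-k)M_A(k)=\tfrac{1}{4}(\mathbf{I}-kS)(\mathbf{I}+kS)=\tfrac{1}{4}(\mathbf{I}-k^2S^2)=0$ for $k=\pm 1$. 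These two identities, $M_A(k)^2=M_A(k)$ and $M_A(k)M_A(-k)=0$, are the only structural inputs the proof will use.

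Next I would determine how a plaquette operator $B=\prod_{i\in\partial p}\sigma^z_i$ commutes past $S$. Because $\sigma^x$ and $\sigma^z$ anticommute on a common site and commute on distinct sites, moving $B$ through $S$ produces exactly one sign factor for each spin shared by $\partial p$ and the support of $S$; hence $BS=(-1)^{|\partial p\cap\operatorname{supp}(S)|}\,SB$. The statement therefore splits cleanly into two cases according to the parity of the overlap: the even-overlap case, where $B$ and $S$ commute, and the odd-overlap case, where they anticommute.

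In the commuting case I would use $BS=SB$ to obtain $B\,M_A(k)=M_A(k)\,B$, and then the projector identity gives $M_A(k)\,B\,M_A(k)=B\,M_A(k)^2=M_A(k)\,B$, which is the second alternative of the lemma. In the anticommuting case, $BS=-SB$ converts $M_A(k)B$ into $B\,M_A(-k)$, since $M_A(k)B=\tfrac{1}{2}(B+kSB)=\tfrac{1}{2}(B-kBS)=B\,M_A(-k)$. Substituting into the triple product and invoking the orthogonality relation yields $M_A(k)\,B\,M_A(k)=B\,M_A(-k)\,M_A(k)=0$, which is the first alternative.

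The only real obstacle is bookkeeping: one must confirm that for the geometry of Figure~\ref{fig:4} and the chosen string $S$, the overlap parity $|\partial p\cap\operatorname{supp}(S)|$ is well defined and that it exhausts exactly these two possibilities. However, the conclusion depends solely on this parity and not on its precise value, so no geometric input beyond the commute-versus-anticommute dichotomy is needed, and the proof closes once both cases are dispatched. This also makes clear why the same argument applies verbatim to vertex operators under the $\sigma^z$-measurement scheme, as anticipated in the text.
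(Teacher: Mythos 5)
Your proposal is correct and follows essentially the same route as the paper's own proof: both split on the parity of the number of spins shared by $B$ and the measurement string, use idempotency $M_A(k)^2=M_A(k)$ in the even (commuting) case, and use the orthogonality $\tfrac{1}{4}(\mathbf{I}+kS)(\mathbf{I}-kS)=0$ in the odd (anticommuting) case. Your write-up merely makes the sign-tracking $BS=(-1)^{|\partial p\cap\operatorname{supp}(S)|}SB$ and the identity $M_A(k)B=BM_A(-k)$ explicit, which the paper leaves implicit.
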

\begin{proof}
    If $B$ is any plaquette operator such that it does not contains spins in the red circles in figure\ref{fig:4} then clearly $B$ commutes with $M_A(k)$ whenever $M_A(k)$ and $B$ contain even number of common spins. Thus using the property that $M_A(k)$ is PVM operator, we get the desired result. On the other hand, whenever $B$ and $M_A(k)$ contains odd number of common spins (red circles) then $M_ABM_A=\frac{1}{4}(\mathbf{I}+k\sigma^x_{\pi(2)}\sigma^x_{\pi(3)}\cdots \sigma^x_{\pi(n)})(\mathbf{I}-k\sigma^x_{\pi(2)}\sigma^x_{\pi(3)}\cdots \sigma^x_{\pi(n)})B=0$
\end{proof}

\begin{lem}\label{lem:8.2}
    For any Pauli matrices, $\sigma^i_j$, with $i=x,y,z$, the expectation value of $\sigma^i_j$ with respect to ground state is zero i.e., $\braket{\xi|\sigma^i_j|\xi}=0$.
\end{lem}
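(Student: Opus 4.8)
The plan is to show that the ground-state expectation of any single Pauli operator $\sigma^i_j$ vanishes by exploiting the stabilizer structure encoded in \eqref{eq:8.2}. The key observation is that each individual spin $j$ sits on an edge shared by two star operators and two plaquette operators, so I can always find a stabilizer that \emph{anticommutes} with the given $\sigma^i_j$ while still fixing the ground state. Concretely, a Pauli-$X$ on edge $j$ anticommutes with either of the two plaquette operators $B_p$ touching that edge (they contain a single overlapping $\sigma^z$), a Pauli-$Z$ anticommutes with either of the two star operators $A_v$ touching it, and a Pauli-$Y$ anticommutes with both a neighbouring $A_v$ and a neighbouring $B_p$, so I may use either one.

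The main step is the standard stabilizer trick. Fix $\sigma^i_j$ and choose a stabilizer $S\in\{A_v,B_p\}$ with $S\ket{\xi}=\ket{\xi}$, $S^2=\mathbb{I}$, and $S\sigma^i_j = -\sigma^i_j S$. Then I would insert $S$ using $S\ket{\xi}=\ket{\xi}$ on the right and $\bra{\xi}S=\bra{\xi}$ on the left, giving
\begin{equation}
\braket{\xi|\sigma^i_j|\xi}=\braket{\xi|S\,\sigma^i_j\,S|\xi}=-\braket{\xi|\sigma^i_j|\xi},
\end{equation}
which forces $\braket{\xi|\sigma^i_j|\xi}=0$. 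The argument is uniform across $i=x,y,z$ once the correct anticommuting stabilizer is selected, and it is independent of which of the four degenerate ground states $\ket{\xi}$ one picks, since every stabilizer fixes all of them.

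I would organise the write-up as a short case analysis on $i\in\{x,y,z\}$, in each case naming an explicit plaquette or star operator adjacent to edge $j$ that anticommutes with $\sigma^i_j$, and then applying the one-line insertion identity above. The only thing to verify carefully is the anticommutation bookkeeping: that a plaquette and a spin on its boundary share exactly one site so that $\sigma^x$ (or $\sigma^y$) picks up a single $-1$, and likewise for a star and $\sigma^z$ (or $\sigma^y$); this is precisely the overlap count already used in the commutation discussion following \eqref{eq:8.1} and in Lemma~\ref{lem:8.1}. I do not anticipate a genuine obstacle here — the mild subtlety is simply making sure that for \emph{every} edge $j$ such an adjacent stabilizer exists, which is guaranteed on the torus because there are no boundaries and each edge always borders two stars and two plaquettes.
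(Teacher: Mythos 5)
Your proposal is correct and uses essentially the same argument as the paper: insert a stabilizer ($B_p$ for $\sigma^x,\sigma^y$, $A_v$ for $\sigma^z$) that fixes $\ket{\xi}$ and anticommutes with the single Pauli, yielding $\braket{\xi|\sigma^i_j|\xi}=-\braket{\xi|\sigma^i_j|\xi}=0$. The only difference is cosmetic --- you conjugate by $S$ on both sides while the paper inserts it once and moves it across --- and your extra remark that every edge on the torus borders suitable stabilizers is a worthwhile clarification.
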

\begin{proof}
    Notice that when $i=x,y$, we have $\braket{\xi|\sigma^i_j|\xi}=\braket{\xi|\sigma^i_jB_j|\xi}=-\braket{\xi|B_j\sigma^i_j|\xi}=-\braket{\xi|B^\dagger_j\sigma^i_j|\xi}=-\braket{\xi|\sigma^i_j|\xi}\implies \braket{\xi|\sigma^i_j|\xi}=0$ . When $i=z$ then we use $A_j$ and rest of the calculation is similar.
\end{proof}
Expanding the equation \eqref{eq:8.6}, we can  rewrite it as 
\begin{widetext}
    \begin{align}
    E_B-E_A&=\sum_k ik\frac{\sin(2\theta)}{2}\braket{\xi|M_A(k)[H,\hat{n}\cdot \Vec{\sigma_{r_1}}]M_A(k)|\xi}+\sum_k\sin^2(\theta)\braket{\xi|M_A(k)\hat{n}\cdot \Vec{\sigma_{r_1}}[H,\hat{n}\cdot \Vec{\sigma_{r_1}}]M_A(k)|\xi}\label{eq:8.8}
\end{align}
\end{widetext}
Using the equation\eqref{eq:8.7}, the property of vertex operators, $A_jM_A(k)=M_A(k)A_j$ for all $j$, lem\ref{lem:8.1} and lem\ref{lem:8.2}, the first term in the above equation vanishes. Thus equation\eqref{eq:8.8} becomes;
\begin{equation}
    E_B-E_A=\sin^2(\theta)\sum_k\braket{\xi|M_A(k)\hat{n}\cdot \Vec{\sigma_{r_1}}[H,\hat{n}\cdot \Vec{\sigma_{r_1}}]M_A(k)|\xi}\label{eq:8.9}
\end{equation}
 From the lem \ref{lem:8.1}, we find all the expression proportional to plaquette operator $B$ in the above equation are zero. Thus only possible nonzero terms are those which are proportional to $A$. In the below, we show that all the terms of form $M_A(k)\sigma^i_{r_1}\sigma^j_{r_1}AM_A(k)$ for $(i,j)\in \{(z,x),(x,y)\}$ has zero expectation value with $\ket{\xi}$.
 \begin{lem}\label{lem:8.3}
     For all $(i,j)\in \{(z,x),(x,y)\}$, the expectation value of $M_A(k)\sigma^i_{r_1}\sigma^j_{r_1}AM_A(k)$ with $\ket{\xi}$ is zero.
 \end{lem}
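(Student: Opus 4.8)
The plan is to exploit the Pauli algebra to collapse the two single-site factors $\sigma^i_{r_1}\sigma^j_{r_1}$ into a single Pauli operator on the $B$-spin $r_1$, and then to recycle exactly the machinery of Lemmas \ref{lem:8.1} and \ref{lem:8.2}. For the two relevant pairs one has $\sigma^z_{r_1}\sigma^x_{r_1}=i\sigma^y_{r_1}$ and $\sigma^x_{r_1}\sigma^y_{r_1}=i\sigma^z_{r_1}$, so in both cases $\sigma^i_{r_1}\sigma^j_{r_1}=i\sigma^l_{r_1}$ with $l\in\{y,z\}$. The target quantity is therefore $i\,\braket{\xi|M_A(k)\sigma^l_{r_1}AM_A(k)|\xi}$, and it suffices to show that this vanishes.

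First I would reduce the operator sandwiched between the two measurement operators. Writing $A=A_1+A_2$ and treating each vertex term separately, I would push the right-hand $M_A(k)$ through $A_j$ using the stated commutation $A_jM_A(k)=M_A(k)A_j$, and then use the stabilizer condition $A_j\ket{\xi}=\ket{\xi}$ to absorb $A_j$. Because $M_A(k)$ is supported entirely on $A$ while $\sigma^l_{r_1}$ lives on the single $B$-spin, the two commute, and together with $M_A^2(k)=M_A(k)$ this collapses each term to $\braket{\xi|M_A(k)\sigma^l_{r_1}|\xi}$. Hence the whole expression is proportional to $\braket{\xi|M_A(k)\sigma^l_{r_1}|\xi}$.

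To finish, I would expand $M_A(k)=\frac{1}{2}(\mathbf{I}+kX_A)$ with $X_A=\sigma^x_{\pi(2)}\cdots\sigma^x_{\pi(n)}$. The identity part gives $\frac{1}{2}\braket{\xi|\sigma^l_{r_1}|\xi}$, which is zero by Lemma \ref{lem:8.2}. The remaining part is $\frac{k}{2}\braket{\xi|X_A\sigma^l_{r_1}|\xi}$, and here I would apply the same anticommutation-with-a-stabilizer argument used in Lemma \ref{lem:8.2}, but now to the full many-body string rather than to a single Pauli: since $X_A$ is a product of $\sigma^x$'s it commutes with every vertex operator, while $\sigma^l_{r_1}$ with $l\in\{y,z\}$ anticommutes with the factor $\sigma^x_{r_1}$ contained in any vertex operator $A_v$ that touches $r_1$. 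Thus $X_A\sigma^l_{r_1}$ anticommutes with that $A_v$, and using $A_v\ket{\xi}=\ket{\xi}$ one gets $\braket{\xi|X_A\sigma^l_{r_1}|\xi}=\braket{\xi|X_A\sigma^l_{r_1}A_v|\xi}=-\braket{\xi|X_A\sigma^l_{r_1}|\xi}$, which forces the expectation to vanish.

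The main obstacle I anticipate is the very last step: Lemma \ref{lem:8.2} is stated only for a single Pauli matrix, so the sign-flip trick must be justified for the extended operator $X_A\sigma^l_{r_1}$. The delicate point is to confirm that $X_A$, which may overlap the support of $A_v$ on several $A$-spins, genuinely commutes with $A_v$ (which it does, since $\sigma^x$ commutes with $\sigma^x$ even on shared sites), so that the \emph{sole} source of anticommutation is the $B$-site factor $\sigma^l_{r_1}$. Everything else is bookkeeping with the stabilizer relations and the idempotency of $M_A(k)$.
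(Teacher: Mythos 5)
Your proposal is correct and follows essentially the same route as the paper: collapse $\sigma^i_{r_1}\sigma^j_{r_1}$ to $i\sigma^l_{r_1}$ with $l\in\{y,z\}$ via the Pauli algebra, then use the fact that $\sigma^l_{r_1}$ anticommutes with the $\sigma^x_{r_1}$ factor of the vertex stabilizers touching $r_1$ (while everything else commutes) together with $A_v\ket{\xi}=\ket{\xi}$ to force the expectation value to equal its own negative. The only cosmetic difference is that you first absorb $A_j$ into the ket and then expand $M_A(k)=\tfrac12(\mathbf{I}+kX_A)$ to apply the sign-flip to the string $X_A\sigma^l_{r_1}$, whereas the paper keeps $A$ in place and anticommutes it past $\sigma^l_{r_1}$ directly, using $[A,M_A(k)]=0$ without expanding the measurement operator.
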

\begin{proof}
    Notice that $\sigma^x_{r_1}\sigma^y_{r_1}\sigma^z_{r_1}=i$, therefore $\sigma^i_{r_1}\sigma^j_{r_1}\propto \sigma^l_{r_1}$ for some appropriate $l=y,z$. Thus $\braket{\xi|M_A(k)\sigma^i_{r_1}\sigma^j_{r_1}AM_A(k)|\xi}\propto \braket{\xi|M_A(k)\sigma^l_{r_1}AM_A(k)|\xi}$. Using $[\sigma^l_{r_1},M_A(k)]=0$, $[A,M_A(k)]=0$ and $\{\sigma^l_{r_1},A\}=0$, we find that,
    
    $\braket{\xi|M_A(k)\sigma^l_{r_1}AM_A(k)|\xi}=\braket{\xi|M_A(k)\sigma^l_{r_1}A|\xi}=-\braket{\xi|AM_A(k)\sigma^l_{r_1}|\xi}=-\braket{\xi|A^\dagger M_A(k)\sigma^l_{r_1}|\xi}=-\braket{\xi|M_A(k)\sigma^l_{r_1}A|\xi}\implies \braket{\xi|M_A(k)\sigma^l_{r_1}A|\xi}=0$
\end{proof}
Therefore, equation\eqref{eq:8.9} becomes, $E_B-E_A=4\sin^2(\theta)(n^2_y+n^2_z)\braket{\xi|\sum_kM_A(k)|\xi}=4\sin^2(\theta)(n^2_y+n^2_z)\ge 0$. Therefore it suggests, no energy teleportation!

\section{Conclusion}
In this paper we demonstrated that even if a quantum system has entangled ground state, QET may not be feasible i.e., for a given measurements scheme, which breakdowns all the quantum correlation including entanglement between bipartite system,  there may not exist any LOCC, which lowers the local energy density from global ground state. Comments are welcome.
\section{Acknowledgement}
I would like to thank Jinzhao Wang for his valuable comments on this paper.
\bibliography{apssamp}

\end{document}